\newtheorem{theorem}{Theorem}
\newtheorem{lemma}{Lemma}
\newtheorem{corollary}{Corollary}
\newcommand{\be}{\begin{equation}}
\newcommand{\ee}{\end{equation}}
\newcommand{\bse}{\begin{subequations}}
\newcommand{\ese}{\end{subequations}}
\newcommand{\bea}{\begin{eqnarray}}
\newcommand{\eea}{\end{eqnarray}}
\newcommand{\ba}{\begin{array}}
\newcommand{\ea}{\end{array}}
\newcommand{\bc}{\begin{center}}
\newcommand{\ec}{\end{center}}
\begin{document}

\title{\bf Whether a quantum computation employs nonlocal resources is operationally undecidable}

\author{{Chris Fields$^1$, James F. Glazebrook$^2$, Antonino Marcian\`{o}$^3$ and Emanuele Zappala$^4$}\\ \\
{\it$^1$ Allen Discovery Center}\\
{\it Tufts University, Medford, MA 02155 USA}\\
{fieldsres@gmail.com} \\
{ORCID: 0000-0002-4812-0744} \\ \\
{\it$^2$ Department of Mathematics and Computer Science,} \\
{\it Eastern Illinois University, Charleston, IL 61920 USA} \\
{\it and} \\
{\it Adjunct Faculty, Department of Mathematics,}\\
{\it University of Illinois at Urbana-Champaign, Urbana, IL 61801 USA}\\
{jfglazebrook@eiu.edu}\\
{ORCID: 0000-0001-8335-221X}\\ \\
{\it$^3$ Center for Field Theory and Particle Physics \& Department of Physics} \\
{\it Fudan University, Shanghai, CHINA} \\
{\it and} \\
{\it Laboratori Nazionali di Frascati INFN, Frascati (Rome), ITALY} \\
{marciano@fudan.edu.cn} \\
{ORCID: 0000-0003-4719-110X} \\ \\
{\it$^4$ Department of Mathematics and Statistics,} \\
{\it Idaho State University, Pocatello, ID 83209, USA} \\
{emanuelezappala@isu.edu} \\
{ORCID: 0000-0002-9684-9441} \\ \\
}
\maketitle

\begin{abstract}
\noindent
Computational complexity characterizes the usage of spatial and temporal resources by computational processes. In the classical theory of computation, e.g. in the Turing Machine model, computational processes employ only local space and time resources, and their resource usage can be accurately measured by us as users. General relativity and quantum theory, however, introduce the possibility of computational processes that employ nonlocal spatial or temporal resources. While the space and time complexity of classical computing can be given a clear operational meaning, this is no longer the case in any setting involving nonlocal resources. In such settings, theoretical analyses of resource usage cease to be reliable indicators of practical computational capability.
We prove that the verifier (C) in a multiple interactive provers with shared entanglement (MIP*) protocol cannot operationally demonstrate that the ``multiple’’ provers are independent, i.e. cannot operationally distinguish a MIP* machine from a monolithic quantum computer. Thus C cannot operationally distinguish a MIP* machine from a quantum TM, and hence cannot operationally demonstrate the solution to arbitrary problems in RE. 
Any claim that a MIP* machine has solved a TM-undecidable problem is, therefore, circular, as the problem of deciding whether a physical system is a MIP* machine is itself TM-undecidable. 
Consequently,  despite the space and time complexity of classical computing having a clear operational meaning, this is no longer the case in any setting involving nonlocal resources. In such settings, theoretical analyses of resource usage cease to be reliable indicators of practical computational capability. This has practical consequences when assessing newly proposed computational frameworks based on quantum theories. 

\end{abstract}

\textbf{Keywords:} Closed timelike curve; Computational complexity; LOCC protocol; MIP* = RE; Nonlocal games

\section{Introduction}

Computational complexity characterizes the usage of spatial and temporal resources by computational processes.  As users of such processes, we are interested in their resource requirements as measured by us. For example, we want to know whether a computation will halt in polynomial time as measured by our clocks.  In the classical theory of computation, e.g. in the Turing Machine (TM) model, computational processes employ only local space and time resources, and their resource usage can be accurately measured by us as users.  General relativity and quantum theory, however, introduce the possibility of computational processes that employ nonlocal spatial or temporal resources.  One notable example is the use of closed timelike curves (CTCs), which enable even otherwise-classical computers to employ arbitrary temporal resources as measured in their reference frames, and hence to solve problems that are exponential in time (class NEXP) for TMs \cite{deutsch:91, brun:03, aaronson:09}.  A second example is the ability of multiple, otherwise-independent, interactive provers (MIP) that share entanglement as a resource (MIP*) to solve, with probability approaching unity, TM-undecidable problems such as the Halting Problem (class RE). This is the celebrated result stating that MIP* = RE \cite{ji:20}.

We show in what follows that whether physically-implemented computational processes, i.e. physical computers, employ such nonlocal resources is operationally undecidable.  In the equivalent game-theoretic language, we show that whether players in a nonlocal game employ nonlocal strategies is undecidable by the referee of the game.  We demonstrate these results in the generic context of a {\em local operations, classical communication}  (LOCC) protocol \cite{chit:14}, in which quantum systems, interpretable as ``agents'' or ``processes'' or ``players'' Alice ($A$) and Bob ($B$) communicate via both quantum and classical channels traversing an environment ($E$), and in which the classical communication channel is via a third quantum system, interpretable as a ``user'' or ``verifier'' or ``referee'' Charlie ($C$), who is able to turn on, or off, an interaction that decoheres the quantum channel between $A$ and $B$.  Canonical Bell/EPR experiments in which $C$ both controls the source of entangled pairs observed by $A$ and $B$, and tests the observations recorded by $A$ and $B$ for violations of the Clauser-Horne-Shimony-Holt (CHSH) inequality \cite{chsh:69} have this form \cite{fgm:24a}.

Following a brief review of the relevant background in \S \ref{background}, we begin by showing in \S \ref{main-result} that $C$ cannot operationally demonstrate, using just data received from $A$ and $B$, that the joint state $|AB \rangle$ is separable.  From this it immediately follows that the verifier ($C$) in a multiple interactive provers with shared entanglement (MIP*) protocol cannot operationally demonstrate that the ``multiple'' provers are independent, i.e. cannot operationally distinguish MIP* machines from monolithic quantum computers.  As the latter are known to be TM-equivalent \cite{deutsch:85}, this shows that $C$ cannot operationally distinguish a MIP* machine from a quantum TM, and hence cannot operationally demonstrate the solution to arbitrary problems in RE.  Expressed in the language of {\em constraint satisfaction problems} (CSPs) \cite{culf:24}, $C$ cannot operationally demonstrate independence between constraints, and hence cannot operationally identify partial solutions.  We then employ the limit as $C \rightarrow E$ to show, in \S \ref{ctcs}, that a channel from $A$ to $B$ that is classical, and therefore causal, in the spacetime coordinates employed by $C$ may be a CTC in the coordinates employed by the joint system $AB$.  Hence $C$ cannot operationally determine whether computations implemented by $AB$ employ CTCs as a resource.  We conclude in \S \ref{concl} that while the space and time complexity of classical computing can be given a clear operational meaning, this is no longer the case in any setting involving nonlocal resources.  In such settings, therefore, theoretical analyses of resource usage cease to be reliable indicators of practical computational capability. 

\section{Background}\label{background}

\subsection{Nonlocal games}

We commence by recalling what is meant by a {\em  nonlocal game}, a concept commanding a special status in quantum information theory. A nonlocal game, in its basic form, unfolds via the interaction of three parties: two noncommuting {\em players} or {\em provers} $A$ and $B$ and a {\em verifier} or {\em referee} $C$. The players $A$ and $B$ are allowed to communicate classically before the start of play, but not after; they are also allowed to share an arbitrary, bipartite state. A verifier $C$ samples a pair of questions from some distribution, and then sends one of them to each of $A$ and $B$ separately. Each of $A$ and $B$ answers {\em classically} to the verifier. They win the game if the questions and answers satisfy a given predicate. Each of $A$ and $B$ knows the distribution of the questions and the predicate. The {\em quantum value} is the supremum of the probability that the players win the game (for generalizations to fully nonlocal quantum games with allowable noise and further details, see e.g. \cite{ji:20,vidick:22,qin:23}).

The above description can be extended from two provers to multiple provers. In a {\em multiple interactive prover} (MIP) game, first introduced in \cite{ben:88}, we have multiple provers who are able to communicate with each other prior to a problem being posed but not after, that try to convince a polynomial time verifier that a string $x$ belongs to some language $\mathcal {L}$. The class MIP($p$, $k$) indicates $p$ players with $k$ rounds.  It has been shown that considering two provers, i.e. $p = 2$ is always seen as sufficient; hence all such games can be represented in
MIP($2$, $k$) (often with $k=1$) \cite{ben:88,feige:92}. If shared entanglement is permitted, then we arrive at the class MIP* introduced in \cite{cleve:04}. As pointed out in \cite{culf:24} for the case of CSPs, entanglement permits provers to execute correlations that cannot be sampled by classical provers, i.e to violate the CHSH inequality. This improved ability on the part of the provers encourages the verifier to set harder tasks. A one-round MIP or MIP* is equivalent to a family of nonlocal games\footnote{In \cite{kalai:23} it is shown how a large class of multiprover, nonlocal games, can be recompiled/reduced to a single-prover interactive game. In the presence of a TM, without loss of generality, the game in question can be the one generated by the TM. A large class of such games are known to be undecidable (as discussed in \cite{fg:24} and references therein). }.

The ground-breaking result of Ji et al. \cite{ji:20} is that MIP*=RE, the latter being the {\em class of recursively enumerable languages}, i.e. the class of languages $\mathcal{L}$ equivalent to the Halting problem \cite{hopcroft:79}. In terms of quantum values, as noted in \cite {qin:23}, a consequence of the result of \cite{ji:20} is that approximating the quantum value of a fully nonlocal game is undecidable.  Crucially, the operational configuration that is employed in \cite{ji:20} to define a MIP* machine is a LOCC protocol: the two independent, and therefore separable, provers ($A$ and $B$) communicate classically via a TM (or user) verifier $C$ that poses problems and checks answers while sharing an entangled pair as a quantum communication channel $Q$.  Effectively, it is sufficient to prove the main result for MIP*(2,1), i.e. for two provers in one round.

\subsection{LOCC protocols}

The canonical LOCC protocol is a Bell/EPR experiment, where $A$ and $B$ must agree, via classical communication, to employ specified detectors in specified ways, and must later exchange their accumulated data (or transfer to the 3rd party $C$) in the form of classical records.  We showed in \cite{fgm:22a} that sequentially-repeated state preparations and/or measurements that employ mutually commuting QRFs (for instance, the sequentially repeated preparations and/or measurements of position and spin during a Bell/EPR experiment), are representable, without loss of generality, by topological quantum field theories (TQFTs) \cite{atiyah:88}. We then showed in \cite{fgm:24a} that any LOCC protocol can be represented by Diagram \eqref{locc-diag}, in which $A$ and $B$ are mutually separable and are separated from their joint environment $E$ by a holographic screen $\mathscr{B}$, implement read/write quantum reference frames (QRFs) $Q_A$ and $Q_B$, respectively, and communicate via classical and quantum channels implemented by $E$.

\begin{equation} \label{locc-diag}
\begin{gathered}
\begin{tikzpicture}[every tqft/.append style={transform shape}]
\draw[rotate=90] (0,0) ellipse (2.8cm and 1 cm);
\node[above] at (0,1.9) {$\mathscr{B}$};
\draw [thick] (-0.2,1.9) arc [radius=1, start angle=90, end angle= 270];
\draw [thick] (-0.2,1.3) arc [radius=0.4, start angle=90, end angle= 270];
\draw[rotate=90,fill=green,fill opacity=1] (1.6,0.2) ellipse (0.3 cm and 0.2 cm);
\draw[rotate=90,fill=green,fill opacity=1] (0.2,0.2) ellipse (0.3 cm and 0.2 cm);
\draw [thick] (-0.2,-0.3) arc [radius=1, start angle=90, end angle= 270];
\draw [thick] (-0.2,-0.9) arc [radius=0.4, start angle=90, end angle= 270];
\draw[rotate=90,fill=green,fill opacity=1] (-0.6,0.2) ellipse (0.3 cm and 0.2 cm);
\draw[rotate=90,fill=green,fill opacity=1] (-2.0,0.2) ellipse (0.3 cm and 0.2 cm);
\draw [rotate=180, thick, dashed] (-0.2,0.9) arc [radius=0.7, start angle=90, end angle= 270];
\draw [rotate=180, thick, dashed] (-0.2,0.3) arc [radius=0.1, start angle=90, end angle= 270];
\draw [thick] (-0.2,0.5) -- (0,0.5);
\draw [thick] (-0.2,-0.1) -- (0,-0.1);
\draw [thick] (-0.2,-0.9) -- (0,-0.9);
\draw [thick] (-0.2,-0.3) -- (0,-0.3);
\draw [thick, dashed] (0,0.5) -- (0.2,0.5);
\draw [thick, dashed] (0,-0.1) -- (0.2,-0.1);
\draw [thick, dashed] (0,-0.9) -- (0.2,-0.9);
\draw [thick, dashed] (0,-0.3) -- (0.2,-0.3);
\node[above] at (-3,1.7) {Alice};
\node[above] at (-3,-1.7) {Bob};
\node[above] at (2.8,1.7) {$E$};
\draw [ultra thick, white] (-0.9,1.5) -- (-0.7,1.5);
\draw [ultra thick, white] (-1,1.3) -- (-0.8,1.3);
\draw [ultra thick, white] (-1,1.1) -- (-0.8,1.1);
\draw [ultra thick, white] (-1,0.9) -- (-0.8,0.9);
\draw [ultra thick, white] (-1.1,0.7) -- (-0.8,0.7);
\draw [ultra thick, white] (-1.1,0.5) -- (-0.8,0.5);
\draw [ultra thick, white] (-1,-0.9) -- (-0.8,-0.9);
\draw [ultra thick, white] (-1,-1.1) -- (-0.8,-1.1);
\draw [ultra thick, white] (-1,-1.3) -- (-0.8,-1.3);
\draw [ultra thick, white] (-0.9,-1.5) -- (-0.7,-1.5);
\draw [ultra thick, white] (-0.9,-1.7) -- (-0.7,-1.7);
\draw [ultra thick, white] (-0.8,-1.9) -- (-0.6,-1.9);
\draw [ultra thick, white] (-0.8,-2.1) -- (-0.6,-2.1);
\node[above] at (-1.3,1.4) {$Q_A$};
\node[above] at (-1.3,-2.4) {$Q_B$};
\draw [rotate=180, thick] (-0.2,2.3) arc [radius=2.1, start angle=90, end angle= 270];
\draw [rotate=180, thick] (-0.2,1.7) arc [radius=1.5, start angle=90, end angle= 270];
\draw [thick] (-0.2,1.9) -- (0,1.9);
\draw [thick] (-0.2,1.3) -- (0,1.3);
\draw [thick, dashed] (0.2,1.9) -- (0,1.9);
\draw [thick, dashed] (0.2,1.3) -- (0,1.3);
\draw [thick] (-0.2,-1.7) -- (0,-1.7);
\draw [thick] (-0.2,-2.3) -- (0,-2.3);
\draw [thick, dashed] (0.2,-1.7) -- (0,-1.7);
\draw [thick, dashed] (0.2,-2.3) -- (0,-2.3);
\draw [ultra thick, white] (0.3,2) -- (0.3,1.2);
\draw [ultra thick, white] (0.5,2) -- (0.5,1.2);
\draw [ultra thick, white] (0.7,1.9) -- (0.7,1.1);
\draw [ultra thick, white] (0.3,-2.4) -- (0.3,-1.5);
\draw [ultra thick, white] (0.5,-2.4) -- (0.5,-1.5);
\draw [ultra thick, white] (0.7,-1.8) -- (0.7,-1.5);
\node[above] at (4.5,-2.4) {Quantum channel};
\draw [thick, ->] (2.9,-2) -- (0.7,-0.8);
\node[above] at (4.5,-1.4) {Classical channel};
\draw [thick, ->] (2.9,-0.9) -- (2.3,-0.6);
\end{tikzpicture}
\end{gathered}
\end{equation}
Note both that $A$ and $B$ being mutually separable is required for the assumption of classical communication via a causal channel in $E$, and that this assumption renders $Q_A$ and $Q_B$ noncommutative and hence subject to quantum contextuality \cite{fg:21,fg:23}.

Two defining characteristics of LOCC protocols are worth emphasizing \cite{fgmz:24}:
\begin{itemize}
    \item[(1)] $A$ and $B$ both perform only {\em local} operations.  They must, therefore, each employ spatial quantum reference frames (QRFs \cite{bartlett:07}), which we will denote $X_A$ and $X_B$, respectively, with respect to which they specify the position of the quantum degrees of freedom that they manipulate, e.g. the positions of the detectors in a Bell/EPR experiment.  These spatial QRFs must commute with the QRFs $Q_A$ and $Q_B$ that they, respectively, employ to manipulate the quantum channel, i.e. $[X_A, Q_A] = [X_B, Q_B] =_{def} 0$.

    \item[(2)]
$A$ and $B$ must both comprise sufficient degrees of freedom for them both to implement their respective QRFs and to communicate classically.  This is, effectively, a large $N$-limit that assures their separability as physical systems.
\end{itemize}

We proved in \cite[Thm. 1]{fgmz:24} that in the operational setting of a two-agent LOCC protocol \cite{chit:14}, two potential provers, $A$ and $B$, cannot operationally distinguish monogamous entanglement from a topological identification of points in their respective local spacetimes, one local to $A$, and the other local to $B$.  Specifically:

\begin{theorem} \label{main-th}
{\rm \cite[Thm. 1]{fgmz:24}}  In any LOCC protocol in which all systems are finite, and in which the boundary $\mathscr{B}$ between the communicating agents $A$ and $B$ and their joint environment $E$ is a holographic screen, as the entanglement made available to $A$ and $B$ by the quantum channel approaches pairwise monogamy, and hence the decoherence in the quantum channel detectable by $A$ or $B$ decreases to zero, the number of environmental degrees of freedom of $E$ required to implement the quantum channel becomes operationally indistinguishable, by $A$ or $B$, from zero in the limit of monogamous entanglement.
\end{theorem}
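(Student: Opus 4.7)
The plan is to combine three ingredients available from the LOCC setup depicted in Diagram \eqref{locc-diag}: (i) the holographic-screen encoding, which limits the information about $E$ accessible to $A$ or $B$ to what is registered on $\mathscr{B}$; (ii) a monogamy-of-entanglement bound, which forces the $A{:}E$ and $B{:}E$ entanglement to vanish as the $A{:}B$ entanglement carried by the quantum channel saturates; and (iii) the LOCC restriction itself, which allows $A$ and $B$ to probe $E$ only through operations on $\mathscr{B}$ implemented by the QRFs $Q_A$ and $Q_B$. The operational count of environmental degrees of freedom that $A$ or $B$ can assign is identified, as the only operationally accessible proxy, with the effective rank of the Stinespring/Kraus dilation of the $AB$ quantum channel; the goal is to show that this rank collapses in the monogamous limit.

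First I would purify: because all systems are finite, write the global state as a pure state on $\mathscr{H}_A \otimes \mathscr{H}_B \otimes \mathscr{H}_E$ whose $\mathscr{B}$-data determines everything $A$ or $B$ can learn about $E$. The decoherence detectable on the $A$--$B$ quantum channel is then controlled by an entropic measure of $AB$--$E$ correlations such as $I(AB:E)$, which equals $2S(\rho_{AB})$ on a pure global state. Next I would apply a Coffman--Kundu--Wootters type monogamy inequality across the bipartition $(AB \,|\, E)$: as the pairwise entanglement $E(A{:}B)$ saturates its monogamous maximum, both $E(A{:}E)$ and $E(B{:}E)$ are forced to zero and $I(AB:E) \to 0$. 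Operationally, the Kraus rank of the channel implemented via $\mathscr{B}$ collapses, so the number of distinct environmental alternatives that $A$ or $B$ can resolve by any local measurement vanishes in this limit, which is the statement of the theorem.

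The main obstacle is bridging the gap between the Hilbert-space dimension of $E$, which is a nonoperational quantity, and the operational ``number of degrees of freedom'' measurable by $A$ or $B$ using $Q_A, Q_B$. The rank of the dilating isometry is the right proxy, but one must also verify that the holographic encoding on $\mathscr{B}$ commutes, in the monogamous limit, with $Q_A$ and $Q_B$, so that $A$ and $B$ cannot circumvent the invisibility of $E$ by exploiting contextuality; this should follow from $[X_A,Q_A] = [X_B,Q_B] = 0$ together with the fact that the noiseless limit makes the channel effectively unitary and hence preserves the relevant commutation structure. A secondary subtlety is that ``approaches monogamy'' must be quantified uniformly in $\dim \mathscr{H}_E$; the finiteness hypothesis is what makes this quantification well-defined and lets the entropic bound transfer into a genuine operational indistinguishability.
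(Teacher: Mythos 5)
Your proposal is correct in substance but takes a genuinely different route from the paper's. Both arguments pivot on the same first step: pairwise monogamy forces the joint state of $q_Aq_B$ to be (approximately) pure and hence (approximately) decoupled from $E$ --- in the paper this is phrased as the decoherence-freeness condition $H_{Q\bar{Q}} \rightarrow 0$, where $\bar{Q}$ is the complement of the channel $Q$ in $E$; in your version it is the entropic statement $I(AB{:}E) = 2S(\rho_{AB}) \rightarrow 0$ obtained by purification and a monogamy bound. Where you diverge is the mechanism for converting decoupling into operational indistinguishability from zero environmental degrees of freedom. You take the information-theoretic route: the effective channel acting on the shared pair becomes unitary (indeed trivial) in the limit, its minimal Stinespring dilation needs no environment, and since $A$ and $B$ can only probe the channel's action --- dilations being nonunique and locally inaccessible --- they cannot certify a nontrivial environment. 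The paper instead makes a constructive topological move: it folds the holographic screen $\mathscr{B}$ so that the degrees of freedom of $E$ implementing $Q$ are literally reduced to zero, leaving $Q$ as nothing but the joint state $|q_Aq_B\rangle$. Your route is more quantitative (continuity bounds can make ``approaches monogamy'' precise, addressing the uniformity worry you raise yourself), but the paper's folding construction is what yields the topological identification of the locally measured positions $x_A$ and $x_B$, and therefore the ER=EPR reading in Corollary \ref{er-epr-cor}; your dilation-rank argument, while sufficient for the theorem as stated, would not by itself deliver that geometric corollary. A minor caveat: Coffman--Kundu--Wootters-type inequalities are stated for specific measures (the tangle) on qubit systems; for your step you only need the elementary fact that purity of $\rho_{AB}$ forces factorization from $E$, plus continuity, so you can and should avoid invoking CKW by name.
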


The proof is straightforward, and can be sketched as follows.  Let $q_A$ and $q_B$ be distinct (collections of) qubits accessible only to $A$ and $B$, respectively, and suppose $|q_Aq_B\rangle \neq |q_A\rangle |q_B\rangle$, i.e. there is a quantum channel $Q$ shared by $A$ and $B$.  If this channel is embedded in $E$ as shown in Diagram \eqref{locc-diag}, then we can consider the interaction $H_{Q \bar{Q}}$, where $\bar{Q}$ is the complement of $Q$ in $E$, i.e. $Q \bar{Q} = E$.  Monogamous entanglement of $q_A$ and $q_B$ requires that $Q$ be decoherence-free, i.e. that $H_{Q \bar{Q}} \rightarrow 0$.  This can be achieved topologically by folding the boundary $\mathscr{B}$ in a way that decreases the degrees of freedom of $E$ used to implement $Q$ to zero, in which case $Q$ is simply the joint state $|q_Aq_B\rangle$; see \cite{fgmz:24} for details.

Theorem \ref{main-th} has two significant corollaries as noted in \cite{fgmz:24}:

\begin{corollary} \label{qecc-cor}
The codespace dimension of a perfect QECC is operationally indistinguishable from the code dimension.
\end{corollary}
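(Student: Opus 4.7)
The plan is to apply Theorem \ref{main-th} directly to the QECC setting. First, I would identify the data of a perfect QECC with the ingredients of Diagram \eqref{locc-diag}: the encoder and decoder play the roles of $A$ and $B$, the logical degrees of freedom to be preserved play the roles of $q_A$ and $q_B$, and the remaining physical qubits of the codespace --- the ancillas and redundancy qubits that inflate the code dimension up to the codespace dimension --- together with the prescribed noise model play the role of the joint environment $E$ and its embedded channel $Q$.

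Next, I would argue that perfectness of the QECC is precisely the statement that the logical subsystem is decoherence-free with respect to the prescribed errors, which in the notation of the proof of Theorem \ref{main-th} is the condition $H_{Q \bar{Q}} \to 0$. Equivalently, the Knill--Laflamme recovery conditions guarantee that the entanglement between the encoded input available to $A$ and the decoded output available to $B$ is monogamous with respect to $E$, so the hypothesis of Theorem \ref{main-th} is met.

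Applying Theorem \ref{main-th} then yields that the number of environmental degrees of freedom required to implement $Q$ becomes operationally indistinguishable, by $A$ or $B$, from zero. Since these are precisely the extra degrees of freedom that distinguish the full codespace from the logical code, the codespace dimension is operationally indistinguishable from the code dimension, which is the corollary.

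The main obstacle I anticipate is the clean partition of the physical Hilbert space into a ``logical channel'' $Q$ and an ``environmental complement'' $\bar{Q}$ in a way that makes the Knill--Laflamme condition translate precisely into $H_{Q \bar{Q}} \to 0$. Because stabilizer and subsystem codes distribute the logical information nonlocally across the physical qubits, this identification has to be made at the level of the induced tensor decomposition of $E$ rather than qubit by qubit; once this is done, however, the corollary follows immediately from Theorem \ref{main-th}.
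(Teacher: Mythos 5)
Your proposal is correct and follows essentially the same route as the paper, which states Corollary \ref{qecc-cor} as a direct consequence of Theorem \ref{main-th} (with details deferred to \cite{fgmz:24}): a perfect QECC is exactly the decoherence-free, monogamous-entanglement limit of the channel $Q$, so the environmental/redundant degrees of freedom implementing the protection become operationally indistinguishable from zero, collapsing the codespace dimension onto the code dimension. Your explicit dictionary (encoder/decoder as $A$/$B$, logical degrees of freedom as $q_A$/$q_B$, ancilla and redundancy qubits as $E$, Knill--Laflamme conditions as the formalization of $H_{Q\bar{Q}} \to 0$) is precisely the translation the paper intends.
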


\begin{corollary} \label{er-epr-cor}
In any LOCC protocol in which all systems are finite, and in which the boundary $\mathscr{B}$ between the communicating agents $A$ and $B$ and their joint environment $E$ is a holographic screen, a quantum channel implementing a shared, monogamously-entangled pair of qubits (``EPR'') is operationally indistinguishable from a topological identification of the locally-measured locations $x_A$ and $x_B$ of the qubits accessed by $A$ and $B$ respectively (``ER'').
\end{corollary}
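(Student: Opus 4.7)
The plan is to reduce Corollary~\ref{er-epr-cor} to a direct application of Theorem~\ref{main-th} together with the observation that $A$ and $B$ access their qubit locations $x_A$, $x_B$ only through the commuting pairs of QRFs $(X_A,Q_A)$ and $(X_B,Q_B)$, respectively. First I would fix the setup of Diagram~\eqref{locc-diag}: $q_A$ held at $x_A$, $q_B$ held at $x_B$, a channel $Q$ embedded in $E$ mediating their entanglement, and a holographic screen $\mathscr{B}$ separating $A$ and $B$ from $E$. The target is to show that, in the monogamous-entanglement limit, no operation available to $A$ or $B$ distinguishes the ambient-embedded channel from a topological identification $x_A \sim x_B$.

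The key steps proceed as follows. First, I would apply Theorem~\ref{main-th} to conclude that the number of environmental degrees of freedom of $E$ carrying $Q$ becomes operationally indistinguishable from zero as the entanglement shared between $q_A$ and $q_B$ approaches monogamy. Second, I would invoke the holographic character of $\mathscr{B}$, which guarantees that every bulk degree of freedom is already encoded on the screen, so no hidden mediator can survive in the bulk beyond what step one has already erased. Third, I would exhibit the folding of $\mathscr{B}$ along the tube supporting $Q$, as sketched in the proof of Theorem~\ref{main-th}, which collapses the tube to a point and glues together the boundary neighborhoods containing $x_A$ and $x_B$. Because $X_A$ and $X_B$ are local and mutually independent QRFs, this gluing is precisely the ``ER'' description of the correspondence.

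The main obstacle is to argue that the two resulting geometries---a decoherence-free tube versus a point identification---are operationally indistinguishable under every local protocol available to $A$ and $B$. I would address this by noting that any observable accessible to $A$ factors through $(X_A,Q_A)$, and similarly for $B$, so any discrimination test must reduce either to a spatial comparison between $X_A$ and $X_B$ or to a decoherence effect on $Q_A$ or $Q_B$. The former is ruled out because, by definition of an LOCC protocol, $A$ and $B$ perform only local operations and never share a common spatial frame in which the displacement $x_B - x_A$ is a measurable quantity; the latter is ruled out by step one. Corollary~\ref{qecc-cor} closes the remaining loophole, since it forbids $A$ or $B$ from reconstructing a bulk codespace whose dimension exceeds that of the code itself, and hence from inferring any extra channel geometry via tomography on $Q$. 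Assembling these observations produces the claimed operational indistinguishability between the monogamously-entangled EPR channel and the topological identification of $x_A$ and $x_B$.
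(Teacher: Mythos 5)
Your proposal is correct and follows essentially the same route as the paper: Corollary~\ref{er-epr-cor} is obtained there as an immediate consequence of Theorem~\ref{main-th}, whose proof already contains the key move you exhibit---folding the holographic screen $\mathscr{B}$ so that the environmental degrees of freedom implementing $Q$ drop to zero and the channel collapses to the joint state $|q_A q_B\rangle$, i.e.\ to a topological identification of $x_A$ and $x_B$. Your additional scaffolding (factoring all available observables through the commuting QRF pairs, and invoking Corollary~\ref{qecc-cor} against tomographic reconstruction of extra channel geometry) is a harmless elaboration of the indistinguishability claim rather than a different argument; the paper simply treats these points as already contained in Theorem~\ref{main-th} and the separability/free-choice-of-QRF assumptions of the LOCC setting.
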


Hence the acclaimed hypothesis ER = EPR \cite{maldecena:13} can be recovered as an operational theorem, free of any embedding geometry, with the consequence that the local topology of spacetime is observer-relative, and providing a straightforward demonstration of the non-traversability of ER bridges.

\section{MIP* machines are not operationally identifiable}\label{main-result}

To apply Theorem \ref{main-th} to the operational context of a verifier $C$ interacting with a MIP* machine, we add $C$ to Diagram \eqref{locc-diag} as follows:

\begin{equation} \label{locc-C-diag}
\begin{gathered}
\begin{tikzpicture}[every tqft/.append style={transform shape}]
\draw[rotate=90] (0,0) ellipse (2.8cm and 1 cm);
\node[above] at (0,1.9) {$\mathscr{B}$};
\draw [thick] (-0.2,1.9) arc [radius=1, start angle=90, end angle= 270];
\draw [thick] (-0.2,1.3) arc [radius=0.4, start angle=90, end angle= 270];
\draw[rotate=90,fill=green,fill opacity=1] (1.6,0.2) ellipse (0.3 cm and 0.2 cm);
\draw[rotate=90,fill=green,fill opacity=1] (0.2,0.2) ellipse (0.3 cm and 0.2 cm);
\draw [thick] (-0.2,-0.3) arc [radius=1, start angle=90, end angle= 270];
\draw [thick] (-0.2,-0.9) arc [radius=0.4, start angle=90, end angle= 270];
\draw[rotate=90,fill=green,fill opacity=1] (-0.6,0.2) ellipse (0.3 cm and 0.2 cm);
\draw[rotate=90,fill=green,fill opacity=1] (-2.0,0.2) ellipse (0.3 cm and 0.2 cm);
\draw [rotate=180, thick, dashed] (-0.2,0.9) arc [radius=0.7, start angle=90, end angle= 270];
\draw [rotate=180, thick, dashed] (-0.2,0.3) arc [radius=0.1, start angle=90, end angle= 270];
\draw [thick] (-0.2,0.5) -- (0,0.5);
\draw [thick] (-0.2,-0.1) -- (0,-0.1);
\draw [thick] (-0.2,-0.9) -- (0,-0.9);
\draw [thick] (-0.2,-0.3) -- (0,-0.3);
\draw [thick, dashed] (0,0.5) -- (0.2,0.5);
\draw [thick, dashed] (0,-0.1) -- (0.2,-0.1);
\draw [thick, dashed] (0,-0.9) -- (0.2,-0.9);
\draw [thick, dashed] (0,-0.3) -- (0.2,-0.3);
\node[above] at (-3,1.7) {Alice};
\node[above] at (-3,-1.7) {Bob};
\node[above] at (2.8,1.7) {$E$};
\draw [ultra thick, white] (-0.9,1.5) -- (-0.7,1.5);
\draw [ultra thick, white] (-1,1.3) -- (-0.8,1.3);
\draw [ultra thick, white] (-1,1.1) -- (-0.8,1.1);
\draw [ultra thick, white] (-1,0.9) -- (-0.8,0.9);
\draw [ultra thick, white] (-1.1,0.7) -- (-0.8,0.7);
\draw [ultra thick, white] (-1.1,0.5) -- (-0.8,0.5);
\draw [ultra thick, white] (-1,-0.9) -- (-0.8,-0.9);
\draw [ultra thick, white] (-1,-1.1) -- (-0.8,-1.1);
\draw [ultra thick, white] (-1,-1.3) -- (-0.8,-1.3);
\draw [ultra thick, white] (-0.9,-1.5) -- (-0.7,-1.5);
\draw [ultra thick, white] (-0.9,-1.7) -- (-0.7,-1.7);
\draw [ultra thick, white] (-0.8,-1.9) -- (-0.6,-1.9);
\draw [ultra thick, white] (-0.8,-2.1) -- (-0.6,-2.1);
\node[above] at (-1.3,1.4) {$Q_A$};
\node[above] at (-1.3,-2.4) {$Q_B$};
\draw [rotate=180, thick] (-0.2,2.3) arc [radius=2.1, start angle=90, end angle= 270];
\draw [rotate=180, thick] (-0.2,1.7) arc [radius=1.5, start angle=90, end angle= 270];
\draw [thick] (-0.2,1.9) -- (0,1.9);
\draw [thick] (-0.2,1.3) -- (0,1.3);
\draw [thick, dashed] (0.2,1.9) -- (0,1.9);
\draw [thick, dashed] (0.2,1.3) -- (0,1.3);
\draw [thick] (-0.2,-1.7) -- (0,-1.7);
\draw [thick] (-0.2,-2.3) -- (0,-2.3);
\draw [thick, dashed] (0.2,-1.7) -- (0,-1.7);
\draw [thick, dashed] (0.2,-2.3) -- (0,-2.3);
\draw [ultra thick, white] (0.3,2) -- (0.3,1.2);
\draw [ultra thick, white] (0.5,2) -- (0.5,1.2);
\draw [ultra thick, white] (0.7,1.9) -- (0.7,1.1);
\draw [ultra thick, white] (0.3,-2.4) -- (0.3,-1.5);
\draw [ultra thick, white] (0.5,-2.4) -- (0.5,-1.5);
\draw [ultra thick, white] (0.7,-1.8) -- (0.7,-1.5);
\node[above] at (4.5,-2.4) {Quantum channel};
\draw [thick, ->] (2.9,-2) -- (0.7,-0.8);
\node[above] at (4.5,-1.7) {Classical channel};
\draw [thick, ->] (2.9,-1.3) -- (2.15,-0.9);
\draw [thick] (2.9, -0.8) -- (2.9, 0.8) -- (1.3, 0.0) -- (2.9, -0.8);
\draw[rotate=30,fill=green,fill opacity=1] (1.8, -0.7) ellipse (0.3 cm and 0.2 cm);
\draw[rotate=-30,fill=green,fill opacity=1] (1.9, 0.7) ellipse (0.3 cm and 0.2 cm);
\node at (3.7, 0.0) {Charlie};
\end{tikzpicture}
\end{gathered}
\end{equation}

Here $C$ interacts with $A$ and $B$ separately, and only via a classical channel, as required by the definition of MIP*.

We assume for convenience that $A$ and $B$ interact, respectively, with $q_A$ and $q_B$ in a computational basis in which single-qubit measurements have eigenvalues in $\{+1, -1 \}$; no generality is lost in also assuming that $q_A$ and $q_B$ are each single qubits.  The data items $A_i$ and $B_i$ reported by $A$ and $B$, respectively, using the classical channel always, therefore, have values in $\{+1, -1 \}$.  We also assume that $C$ has sufficient degrees of freedom, and in particular, access to sufficient classical memory, to collect sufficient classical data from both $A$ and $B$ to compute the CHSH expectation value with negligible uncertainty.  The CHSH expectation value is:

\begin{equation*} \label{chsh-def}
EXP = |<<A_1,B_1>> + <<A_1,B_2>> + <<A_2,B_1>> - <<A_2,B_2>>|,
\end{equation*}
\noindent
where $<<x,y>>$ denotes the expectation value for a collection of joint measurements of $x$ and $y$.  If $EXP > 2$, classical data reported by $A$ and $B$ violate the CHSH inequality, indicating entanglement between $q_A$ and $q_B$ \cite{chsh:69}.  For single qubits, the upper limit is $EXP \leq 2 \surd 2$, the relevant Tsirelson bound \cite{cirelson:80}.

By assuming that $C$ has the computational resources to obtain the relevant classical data from $A$ and $B$ and compute Eq. \eqref{chsh-def}, we have assumed that the interaction $H_{CE}$ is large enough to provide $C$ with the required thermodynamic free energy \cite{parrondo:15}.  We also assume that $C$ can turn on, or off, a ``decohering'' component $H_{dec}$ of $H_{CE}$ such that when $H_{dec} \neq 0$, classical data obtained from $A$ and $B$ satisfy $EXP \leq 2$, but when $H_{dec} = 0$, classical data obtained from $A$ and $B$ are such that $2 < EXP \leq 2 \surd 2$.  

Recall from above that a MIP* machine requires independent provers that communicate classically with $C$, i.e. Diagram \eqref{locc-C-diag} represents the interaction of $C$ with a MIP* machine only if $A$ and $B$ are separable, i.e. $|AB \rangle = |A \rangle |B \rangle$ for all occupied states $|A \rangle, ~|B \rangle$.  We can therefore ask whether $C$ can decide operationally, i.e. based on data received from $A$ and $B$, whether this condition is met.  

We first note an important ambiguity in the classical data received by $C$.  Let $E_C$ be the total environment with which $C$ interacts, i.e. the composite system $E_C = EAB$.  From Diagram \eqref{locc-C-diag}, we clearly have $H_{CE} = H_{CE_C}$.

\begin{lemma} \label{ambiguity}
$C$ cannot distinguish data $A_i$, $B_i$ sent by $A$ and $B$ via a classical channel from measurements of $E_C$ using observables $\hat{A}_i$, $\hat{B}_i$ that yield outcomes $A_i$, $B_i$.
\end{lemma}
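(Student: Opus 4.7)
The plan is to exploit the fact that $C$'s entire operational access to the world outside itself is mediated by the interaction $H_{CE} = H_{CE_C}$ together with the classical bit sequences it records. I would first make explicit that the classical ``channel'' from $A$ and $B$ to $C$ in Diagram \eqref{locc-C-diag} is physically implemented by degrees of freedom in $E$, so that at the time $C$ reads a bit $A_i$ (respectively $B_i$), the value of that bit is already encoded as the eigenvalue of some pointer observable on a subsystem of $E_C = EAB$. In other words, ``receiving $A_i$ from $A$ via the classical channel'' already is, operationally, the readout of an observable on $E_C$.

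I would then construct, for each $i$, the observables $\hat{A}_i$ and $\hat{B}_i$ on $E_C$ as the pointer observables (with spectrum $\{+1,-1\}$) corresponding to the physical register in $E_C$ that carries $A_i$, respectively $B_i$, at the moment $C$ accesses it. By construction, the probability distribution over outcomes of $\hat{A}_i, \hat{B}_i$ on the actual state of $E_C$ coincides with the distribution $C$ ascribes to the bits $A_i, B_i$ reported by $A$ and $B$. Because $C$'s interaction with the composite $E_C$ is governed by a single Hamiltonian $H_{CE_C}$ that knows nothing about any internal tensor factorization $E \otimes A \otimes B$, the two descriptions --- (i) $A$ and $B$ each autonomously transmitting $A_i, B_i$, and (ii) $C$ performing the measurements $\hat{A}_i, \hat{B}_i$ on $E_C$ --- produce the same classical record.

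Finally, I would close the loop by noting that $C$ has no additional operational resource with which to break this symmetry: any putative test distinguishing the two scenarios would have to be implemented as a further set of measurements on $E_C$, which in turn reduces by the same argument to observables on $E_C$ indistinguishable from bits ``reported by'' hypothetical internal subagents. Thus the lemma is really a statement about the underdetermination of the tensor-product structure of $E_C$ by the classical data $C$ can collect, and it follows directly from the LOCC setup of Diagram \eqref{locc-C-diag} together with the assumption (used to define $EXP$) that $C$'s only output from the protocol is the sequence of classical records.

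The main obstacle, I expect, is not any technical computation but rather ensuring the argument is not circular: one must be careful to define ``operational indistinguishability'' purely in terms of the statistics of the classical bits $A_i, B_i$ available to $C$, and to show that whatever auxiliary measurements $C$ might attempt on $E_C$ cannot reveal the internal factorization. This is handled by the observation that any such auxiliary measurement is itself an observable on $E_C$ whose outcome, once recorded by $C$, is again just a classical bit; the argument then iterates. The conceptual heart of the proof is therefore the identification of the classical channel with pointer observables on $E_C$ --- once this is granted, the existence of $\hat{A}_i, \hat{B}_i$ reproducing the data is immediate.
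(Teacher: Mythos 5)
Your proposal is correct and takes essentially the same approach as the paper's proof: both identify the classical-channel degrees of freedom (the paper's $c_A$, $c_B$) as subsystems of $E_C$, treat $C$'s readout of $A_i$, $B_i$ as measurements of pointer observables $\hat{A}_i$, $\hat{B}_i$ on those subsystems, and conclude that $C$'s data cannot reveal whether those degrees of freedom are coupled to separate systems $A$ and $B$ inside $E_C$. The only difference is in how the final underdetermination step is grounded --- the paper appeals to $C$'s inability to determine the internal interaction $H_E$ or to measure the entanglement entropy across any decomposition internal to $E$, while you appeal to the factorization-blindness of $H_{CE_C}$ together with an iteration argument for auxiliary measurements --- but these are the same point in substance.
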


\begin{proof}
Let $c_A$ and $c_B$ be the degrees of freedom of the classical channel with which $C$ directly interacts using $\hat{A}_i$ and $\hat{B}_i$, respectively.  The classical channel is a component of $E$, so $c_A$ and $c_B$ are degrees of freedom of $E$ and hence degrees of freedom of $E_C$.  $C$ can determine by measurement whether violations of the CHSH inequality by the data $A_i$, $B_i$ correlate with turning on, or off, the decohering interaction $H_{dec}$ with $E$, but $C$ cannot determine the internal interaction $H_E$ or measure the entanglement entropy $\mathcal{S}(E_1, E_2) = - \mathrm{Tr}[\mathrm{Tr}_{E_2}(\rho_{E_1, E_2}) \mathrm{ln}(\mathrm{Tr}_{E_2}(\rho_{E_1, E_2}))] $ across any decompositional boundary separating components $E_1$ and $E_2$ entirely within $E$.  Hence $C$ cannot demonstrate by measurement that the degrees of freedom $c_A$ and $c_B$ are coupled to any components $A$, $B$ of $E_C$ that do not include $c_A$ or $c_B$.
\end{proof}

The fact that all instances of classical communication require a quantum measurement, by the receiving system, of some physical encoding of the communicated information has previously been emphasized by Tipler \cite{tipler:14} among others.  Hence, we have, using the reasoning employed for Theorem \ref{main-th}:




\begin{theorem} \label{channel-th}
An observer $C$ embedded in an environment $E$ cannot determine, either by monitoring classical communication between $A$ and $B$, or by performing local measurements within $E$, whether or not $A$ and $B$ are employing a LOCC protocol with classical and quantum channels traversing $E$.
\end{theorem}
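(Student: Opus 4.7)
The plan is to assemble the result from three earlier ingredients: Lemma \ref{ambiguity}, which shows that incoming classical data are indistinguishable from outcomes of $C$'s own measurements on $E_C$; Theorem \ref{main-th} together with Corollary \ref{er-epr-cor}, which show that a decoherence-free quantum channel between $A$ and $B$ is operationally indistinguishable from a topological identification requiring no degrees of freedom of $E$; and the observation that $C$ has no operational access to the internal Hamiltonian $H_E$ of the environment or to its entanglement entropy across any cut interior to $E_C$.

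First I would fix what ``operational access'' means for $C$ in Diagram \eqref{locc-C-diag}: it is exhausted by (i) the classical records $A_i, B_i$ arriving on what $C$ takes to be the classical channel, and (ii) local measurements $C$ performs on its immediate boundary with $E$ via the interaction $H_{CE}$, including the on/off switch of $H_{dec}$. By Lemma \ref{ambiguity}, option (i) is equivalent, from $C$'s perspective, to a family of observables $\hat{A}_i, \hat{B}_i$ acting directly on degrees of freedom $c_A, c_B \subset E$; in particular, the CHSH statistic $EXP$ depends only on the reduced state at $c_A, c_B$ and carries no information about whether $c_A, c_B$ are further coupled to separable subsystems $A$ and $B$ within $E_C$.

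Next I would argue that option (ii) cannot close the gap. Because $C$ interacts with $E$ only across its own boundary, $C$ can probe the joint state $\rho_{E_C}$ only through that boundary; it cannot resolve $H_E$ or the entropy $\mathcal{S}(E_1,E_2)$ across any decompositional cut interior to $E_C$. Hence the split of $E_C$ into ``environment plus two separable provers'' required by the LOCC setup is not pinned down by $C$'s data. Applying Theorem \ref{main-th} in the regime $EXP \to 2\sqrt{2}$ then shows that, even when $C$'s data certify maximal entanglement, the putative quantum channel traversing $E$ can be realized with an operationally vanishing number of environmental degrees of freedom, up to the ER = EPR identification of Corollary \ref{er-epr-cor}; in the opposite regime where $H_{dec}$ is on and $EXP \leq 2$, the correlations are equally well reproduced by a monolithic quantum system inside $E_C$ with no genuine $A/B$ split and no channels traversing $E$ at all. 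In either regime the LOCC and non-LOCC models are empirically indistinguishable by $C$.

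The main obstacle is articulating precisely what ``$A$ and $B$ are employing a LOCC protocol'' amounts to for $C$; the genuine content of the theorem is that the three defining LOCC ingredients, namely separability of $A$ from $B$, locality of their operations relative to commuting QRFs, and the coexistence of classical and quantum channels in $E$, are theoretical decompositions of $E_C$ that do not correspond to any measurement in $C$'s operational repertoire. Once this is made explicit, the theorem follows by combining the two ambiguities above: every value of $EXP$ that $C$ can record, together with every local measurement $C$ can perform on its own boundary with $E$, admits both a genuine-LOCC realization and a monolithic-$E_C$ realization, so no finite data set can discriminate between them.
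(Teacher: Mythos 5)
Your proposal is correct and follows essentially the same route as the paper's own proof: Lemma \ref{ambiguity} reduces $C$'s operational access to measurements of $c_A$ and $c_B$, and a case analysis on $EXP$ (the $EXP \leq 2$ regime versus the $EXP > 2$ regime) shows that each regime admits both a genuine-LOCC realization and a non-LOCC realization (monolithic $AB$, or merely classically correlated systems), so that no value of $EXP$ can discriminate between them. The only difference is cosmetic: you additionally fold in Theorem \ref{main-th}/Corollary \ref{er-epr-cor} and the inaccessibility of $H_E$ and of entanglement entropy across cuts interior to $E_C$, ingredients the paper invokes only as framing before stating the theorem and in its separate, alternative proof via Lemma \ref{entanglement-entropy}.
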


Theorem \ref{channel-th} follows immediately from Lemma \ref{ambiguity} above:

\begin{proof}
The construction of Diagram \eqref{locc-C-diag} provides $C$ with three items of data: the value of $H_{dec}$ that $C$ sets and the classical data $A_i$ and $B_i$ obtained by measuring $c_A$ and $c_B$, from which a value of $EXP$ can be computed.  We assume that $C$ computes $EXP$ using these data. There are two relevant cases: either $EXP\leq2$ or $EXP>2$, the latter of which is realized if $H_{dec} = 0$.  If $EXP \leq 2$, $C$ can infer that $A$ and $B$ are either classically correlated, which does not require a LOCC protocol since it does not require an operational quantum channel, or $A$ and $B$ are correlated through entangled pairs that respect the CHSH inequality, e.g. Werner states in appropriate parameter ranges \cite{werner}. Hence in this case, $C$ cannot determine whether $A$ and $B$ are not employing a LOCC protocol, or employing a LOCC protocol without violating the CHSH inequality. If, on the other hand, $EXP > 2$, $C$ can infer unambiguously that $A$ and $B$ share a quantum channel. However, $C$ cannot determine that $A$ and $B$ meet the separability condition $|AB \rangle = |A \rangle |B \rangle$ required by LOCC, as $EXP > 2$ is compatible with $A$ and $B$ being entangled (i.e. $|AB \rangle \neq |A \rangle |B \rangle$), which violates the conditions for a LOCC protocol.  Hence $C$ cannot determine whether $A$ and $B$ are employing a LOCC protocol, regardless of the value of $EXP$ that $C$ computes from the available data.
\end{proof}

The fact that $C$ cannot distinguish, on the basis of reported observational outcomes, between $A$ and $B$ sharing an entangled state and $A$ and $B$ being components of an entangled state is indeed well known, and is often discussed in terms of ``conspiracy'' or superdeterminism \cite{hofer:14}.  Treating $A$ and $B$ as ``effectively classical'' experimenters jointly manipulating an entangled state while remaining separable from each other -- as required by the definition of LOCC -- amounts, therefore, to a ``for all practical purposes (FAPP)'' \cite{bell:90} assumption, not a demonstrable fact; see \cite{grinbaum:17} for a general discussion.  

To see that there is no dependence of the above on the definition of $C$, consider the limit in which $C \rightarrow E$, in which case $C$ has maximal direct access to $A$ and $B$, and recall the general notion of entanglement entropy for any system $X$:

\begin{equation}\label{entanglement-1}
\mathcal{S}(X) =_{def} \mathrm{max}_{X_1, X_2 | X_1 X_2 = X} ~\mathcal{S} (X_1 X_2)\,,
\end{equation}
\noindent
where $\mathcal{S} (X_1 X_2)$ is defined as in Lemma~\ref{ambiguity}. 
In Diagram \eqref{locc-C-diag}, $E$ has no means of determining the location of the boundary between $A$ and $B$.  Hence we have:

\begin{lemma} \label{entanglement-entropy}
$E$ cannot determine the entanglement entropy $\mathcal{S}(AB)$.
\end{lemma}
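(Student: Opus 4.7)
The plan is to lift the ambiguity of Lemma \ref{ambiguity} to the limit $C \to E$ and observe that this ambiguity prevents $E$ from specifying the tensor-factor structure required by the definition \eqref{entanglement-1}. First, I would unpack what computing $\mathcal{S}(AB)$ operationally demands: for each candidate bipartition $(X_1, X_2)$ with $X_1 X_2 = AB$, one must identify the reduced state $\rho_{X_1 X_2}$ and perform the partial trace $\mathrm{Tr}_{X_2}$ before taking the maximum. Both steps require a fixed decomposition of $AB$ into physically localized factors, which in turn requires identifying $A$ and $B$ as distinct subsystems with a specified boundary between them.

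Next, I would invoke the observation stated in the paragraph preceding the lemma: in Diagram \eqref{locc-C-diag}, once $C$ has been merged into $E$, there is no remaining external vantage point from which the location of the $A$-to-$B$ boundary could be fixed. Lemma \ref{ambiguity} already established that $C$, prior to the limit, could not distinguish classical data emitted by ostensibly separate agents $A$, $B$ from outcomes of internal measurements on $E_C$, and in particular could not measure the entanglement entropy across any decompositional boundary interior to $E$. Passing to the limit $C \to E$ only strengthens this: $E$ is now itself the interior whose decompositional structure it cannot probe, so any putative split $AB = A \otimes B$ is operationally indistinguishable from an arbitrary reshuffling of $E$'s own internal degrees of freedom.

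Combining these points, I would conclude that the partial traces $\mathrm{Tr}_{X_2}(\rho_{X_1 X_2})$ entering Eq. \eqref{entanglement-1} are not operationally realizable by $E$ for any nontrivial bipartition of $AB$, and hence that the supremum defining $\mathcal{S}(AB)$ is not operationally accessible. The main obstacle I expect is a pedantic one rather than a technical one: one must carefully distinguish the mathematical well-definedness of $\mathcal{S}(AB)$, which is not in question, from its operational accessibility to $E$, and must emphasize that the maximization over bipartitions in \eqref{entanglement-1} is precisely what demands a subsystem structure on $AB$ that the preceding results deny $E$ the means to fix. With that distinction made explicit, the conclusion follows directly from Lemma \ref{ambiguity} applied in the $C \to E$ limit.
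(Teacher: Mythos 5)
Your proposal takes a different route from the paper, and the route has a genuine gap at its central step. The paper's own proof does not lift Lemma \ref{ambiguity} to the limit $C \to E$; it invokes a dimension-counting argument from \cite[Thm. 3.1]{fg:23}: separability of $E$ from the joint system $AB$ requires a weak interaction across the screen $\mathscr{B}$, specifically $N = \mathrm{ln~dim}(\mathcal{H}_{\mathscr{B}}) \ll \mathrm{ln~dim}(\mathcal{H}_E), \mathrm{ln~dim}(\mathcal{H}_{AB})$, so that $\mathscr{B}$ cannot encode, and hence $E$ cannot measure, $\mathrm{dim}(\mathcal{H}_{AB})$; since every interaction of $E$ with $AB$ is mediated by $\mathscr{B}$, no entanglement entropy of any decomposition of $AB$ is measurable by $E$. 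This quantitative, holographic input is what does the work, and it is absent from your argument.

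The gap is your claim that ``passing to the limit $C \to E$ only strengthens'' the ambiguity of Lemma \ref{ambiguity}. It does not: the limit removes the very mechanism on which that lemma's proof rests. Lemma \ref{ambiguity} holds because $C$ touches its environment only through the few channel degrees of freedom $c_A$, $c_B$; but, as the paper notes just before stating Lemma \ref{entanglement-entropy}, in the limit $C \to E$ the observer acquires \emph{maximal} direct access to $A$ and $B$, so restricted access can no longer be cited, and a new reason is needed for why even maximal access through $\mathscr{B}$ cannot resolve the decompositional structure of $AB$. The ``lifted'' form of Lemma \ref{ambiguity} that you invoke --- that $E$ cannot measure entanglement entropy across any bipartition of its environment, which in this limit is exactly $AB$ --- is essentially the statement of Lemma \ref{entanglement-entropy} itself, so this portion of your argument is circular. (You also misdescribe the geometry: in the limit, $AB$ is $E$'s \emph{environment}, on the far side of $\mathscr{B}$, not a reshuffling of $E$'s own internal degrees of freedom.) Your framing of what must be ruled out is correct --- Eq. \eqref{entanglement-1} demands operational access to partial traces over bipartitions of $AB$, hence to a subsystem structure on $AB$ --- but ruling it out requires the screen's bounded information capacity, not an appeal to Lemma \ref{ambiguity}: without even $\mathrm{dim}(\mathcal{H}_{AB})$ being measurable, no bipartition entropy, and hence no maximum over bipartitions, is operationally accessible.
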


\begin{proof}
For details, see the discussion and proof of \cite[Thm. 3.1]{fg:23}. Briefly, separability of $E$ from the joint system $AB$ requires a weak interaction between the two, and specifically that $N = \mathrm{ln~dim}(\mathcal{H}_{\mathscr{B}}) \ll \mathrm{ln~dim}(\mathcal{H}_E) , \mathrm{ln~dim}(\mathcal{H}_{AB})$.  Therefore $\mathscr{B}$ cannot encode, and hence $E$ cannot measure, $\mathrm{dim}(\mathcal{H}_{AB})$.  Therefore $E$ cannot measure the entanglement entropy of any decomposition of the joint system $AB$.
\end{proof}

Hence we have an alternative proof of Theorem \ref{channel-th}:

\begin{proof}
{\rm (Theorem \ref{channel-th})}  Consider the boundary $\mathscr{B}_C$ separating $C$ from the rest of $E$, and let $W$ denote everything outside of $\mathscr{B}_C$.  Then by Lemma \ref{entanglement-entropy}, $C$ cannot determine $\mathcal{S}(W_i W_j)$ between any components $W_i$ and $W_j$ of $W$.  Hence $C$ cannot detect any quantum channel in $W$, whether between $A$ and $B$ or between any other pair of subsystems of $W$.
\end{proof}

No component $C \subseteq E$, therefore, can determine $\mathcal{S}(AB)$.  Hence $C$ cannot determine, either by monitoring classical communication between $A$ and $B$ or by performing local measurements, that $A$ and $B$ are separable, i.e. $C$ cannot operationally distinguish between a MIP* machine and a monolithic quantum computer.  Any claim that a MIP* machine has solved a TM-undecidable problem is, therefore, circular, as the problem of deciding whether a physical system is a MIP* machine is itself TM-undecidable.


\section{Closed Timelike Curves}\label{ctcs}

We now look at a similar situation of non-operational identifiability in a setting in which {\em closed timelike curves} (CTCs) are allowed as computational resources. The idea of CTCs evolved from a number of cosmological questions, particularly pertaining to Black Hole theory, such as those concerning the construction and stability of ER-bridges \cite{morris:88,hawking:92} (for a historical survey, see e.g. \cite{luminet:21}). When instrumental in models of classical computation, CTCs make it possible to solve hard computational problems in constant time (surveyed in \cite{brun:03}). David Deutsch \cite{deutsch:91} demonstrated that quantum computation with quantum data which is capable of traversing CTCs provided a new and powerful physical model of computation, along with self-consistent evolution further engendering (quantum) computational complexity \cite{bacon:04}. As pointed out in \cite{aaronson:09}, Deutsch's approach was to treat a CTC as a region of spacetime where a `causal consistency' condition is imposed; specifically, a region in which the time-evolution operator maps state of the initial hypersurface to itself. This initial state is, therefore, a probabilistic fixed point of the time-evolution operator within the CTC, i.e. a state $\rho$ such that $\Phi (\rho) = \rho$ for the time-evolution operator $\Phi$ within the CTC. 

To model computation using a CTC, consider a Hilbert space of qubits given by $\mathcal{H} = \mathcal{H}_{\rm{ch}} \otimes \mathcal{H}_{\rm{tv}}$, where $\mathcal{H}_{\rm{ch}}$ denotes that of the chronologically respecting qubits, and $\mathcal{H}_{\rm{tv}}$ that of those which traverse CTCs, as shown in Diagram \eqref{ctc-diag} ($cf.$ \cite[Fig. 3]{deutsch:91}):

\begin{equation} \label{ctc-diag}
\begin{gathered}
\begin{tikzpicture}
\node at (4.1,.6) {Timelike data path ($\mathcal{H}_{ch}$ qubits)};
\node at (-4.8,0.3) {CTC data path};
\node at (-4.7,-0.3) {($\mathcal{H}_{tv}$ qubits)};
\draw [thick, ->] (-1,0.2) arc [radius=1, start angle=9, end angle= 351];
\draw [thick, ->] (1,-1) -- (1,1);
\filldraw [fill=white, draw=black] (-2,-0.1) rectangle ++(4,0.4);
\node at (-0.1,0.1) {Process};
\end{tikzpicture}
\end{gathered}
\end{equation}

Importantly, the evolution of the CTC qubits is determined by self-consistency --- though the qubits themselves are an expendable resource \cite{bacon:04}. This means that the state of the CTC qubits at the temporal origin should be the same as those qubits after the evolution $\mathbf{U}$ operator corresponding to the Process in \eqref{ctc-diag}. The density matrix $\mathbf{\rho}$ as a solution at the former, is given by
\begin{equation}\label{density-matrix-1}
\mathbf{\rho} = {\rm{Tr}}_{\rm{ch}} [\mathbf{U} (\mathbf{\rho}_{\rm{in}} \otimes \mathbf{\rho} )\mathbf{U}^{\dagger} ]\,,
\end{equation}
where $\mathbf{\rho}_{\rm{in}}$ denotes the density matrix of the chronologically respecting qubits, and $\rm{Tr}_{\rm{ch}}$ denotes the trace of $\mathcal{H}_{\rm{ch}}$ \cite{deutsch:91,bacon:04}. Thinking in terms of a quantum circuit, and the solution in \eqref{density-matrix-1}, the output $\mathbf{\rho}_{\rm{out}}$ of the circuit is given by \cite{deutsch:91,bacon:04}:
\begin{equation}\label{density-matrix-2}
\mathbf{\rho}_{\rm{out}} = {\rm{Tr}}_{\rm{tv}}[\mathbf{U} (\mathbf{\rho}_{\rm{in}} \otimes \mathbf{\rho} )\mathbf{U}^{\dagger} ]\,.
\end{equation}
This supposes a `gating-free' system. If gating is applied, then the consistency condition changes, and a previously selected temporal origin now becomes arbitrary. It is shown, however, in \cite{bacon:04} that 
potentially different self-consistency solutions are relatable via a standard change of basis.

Let us now reconsider Diagram \eqref{locc-C-diag}, treating the joint system $AB$ as an arbitrary quantum computer and setting the decohering interaction $H_{CE}$ to zero, or equivalently, using the result of Theorem \ref{main-th} to treat $Q$ as an internal quantum resource used by $AB$.  We can then ask: what can $C$ infer about the computational role of the classical channel connecting the ``components'' $A$ and $B$?  This channel being classical requires, by definition, that it is timelike as measured by clocks in $E$. Taking the limit as $C \rightarrow E$, the channel is timelike as measured by $C$'s clocks.  Classicality for $C$ also requires that the channel has finite length, i.e. the endpoints of the channel, which we can denote $A_c$ and $B_c$ respectively, must be such that ${\rm d}_C (A_c, B_c) > 0$ in $C$'s distance metric ${\rm d}_C$.  However, from Lemma \ref{entanglement-entropy} above, we have that $C$ cannot determine the entanglement entropy of any state $|AB \rangle$. Hence $C$ cannot determine that $A$ and $B$ are separable as discussed above.  In particular:

\begin{lemma} \label{no-metric}
In any physical setting described by Diagram \eqref{locc-C-diag}, $C$ cannot determine the distance ${\rm d}_{AB} (A_c, B_c)$, where ${\rm d}_{AB}$ is the metric employed by $AB$, between the classical channel endpoints $A_c$ and $B_c$ on $\mathscr{B}$.
\end{lemma}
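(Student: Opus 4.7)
The plan is to derive Lemma \ref{no-metric} from the operational inaccessibility results already established, namely Lemma \ref{entanglement-entropy} and Theorem \ref{main-th}, by observing that any determination of ${\rm d}_{AB}(A_c, B_c)$ by $C$ would require $C$ to access internal structure of $AB$ that is operationally hidden behind the boundary $\mathscr{B}$. My first step is to distinguish carefully between the two metrics in play. The distance ${\rm d}_C(A_c, B_c) > 0$ is well-defined because $C$ fixes the endpoints $A_c, B_c$ using its own spatial QRFs $X_C$ on $\mathscr{B}$, together with the classicality assumption on the channel. The metric ${\rm d}_{AB}$, by contrast, is defined via the spatial QRFs $X_A$ and $X_B$ introduced in item (1) of the LOCC characterization, which live inside $AB$; to evaluate ${\rm d}_{AB}(A_c, B_c)$, $C$ must first know how $AB$ is decomposed into the subsystems whose QRFs define that metric.

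Next I would argue that this decomposition is precisely what is hidden from $C$. By Lemma \ref{entanglement-entropy}, $C$ cannot measure the entanglement entropy $\mathcal{S}(AB)$ across any putative decomposition of the joint system $AB$. In particular, $C$ cannot distinguish the decomposition $AB = A \otimes B$ required for ${\rm d}_{AB}$ to be defined as a relational distance between two separable agents from any other decomposition, nor from the monolithic case in which no such decomposition exists at all. Hence the very subsystems $A$ and $B$ whose QRFs define ${\rm d}_{AB}$ are not operationally identifiable to $C$, so no operational procedure carried out by $C$ can fix the value of ${\rm d}_{AB}(A_c, B_c)$.

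I would then close the argument by invoking Theorem \ref{main-th} to show that the range of ${\rm d}_{AB}(A_c, B_c)$ consistent with $C$'s data is maximally uninformative. In the monogamous-entanglement limit considered there, the environmental degrees of freedom supporting the channel between $A$ and $B$ become operationally indistinguishable from a topological identification of the endpoints, so ${\rm d}_{AB}(A_c, B_c)$ is consistent with zero; at the opposite extreme, $AB$ may be decomposed so that $A_c$ and $B_c$ sit in widely separated components of $\mathscr{B}$ with ${\rm d}_{AB}(A_c, B_c)$ arbitrarily large. Both extremes yield the same classical outcome statistics at $C$ and the same value of $EXP$ in Eq.~\eqref{chsh-def}, as established in the proof of Theorem \ref{channel-th}. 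Therefore no function of $C$'s observational record can pin down ${\rm d}_{AB}(A_c, B_c)$.

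The main obstacle I anticipate is conceptual rather than technical: making precise the statement that ${\rm d}_{AB}$ is an \emph{internal} metric of $AB$ without smuggling in a background embedding geometry that $C$ could exploit. I would address this by restricting attention, as throughout the paper, to operational quantities --- outcomes of $C$'s classical measurements of $c_A, c_B$ and settings of $H_{dec}$ --- and by using the QRF-based definition of distance implicit in items (1)--(2) of the LOCC characterization, so that any candidate value of ${\rm d}_{AB}(A_c, B_c)$ must be reconstructible from a decomposition of $AB$ into QRF-carrying subsystems, which Lemma \ref{entanglement-entropy} already forbids.
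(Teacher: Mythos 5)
Your proposal reaches the right conclusion, but by a genuinely different route than the paper. The paper's own proof is a one-liner: it notes that $E$ and $AB$ are mutually separable \emph{by construction} in Diagram \eqref{locc-C-diag}, and then invokes the principle (from \cite{fgm:22b}) that mutually separable systems have independent, \emph{free} choice of QRFs, including space and time QRFs; hence $C$ (or $E$, in the limit $C \rightarrow E$) simply has no access to which metric $AB$ employs. You instead build the argument from Lemma \ref{entanglement-entropy} --- $C$ cannot identify the decomposition of $AB$, hence cannot identify the subsystems whose QRFs $X_A$, $X_B$ would define ${\rm d}_{AB}$ --- and then close the underdetermination by using Theorem \ref{main-th} to exhibit data-equivalent scenarios in which ${\rm d}_{AB}(A_c,B_c)$ is zero (topological identification of endpoints) or arbitrarily large. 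The paper's argument is stronger in one respect: it shows the indeterminacy would persist \emph{even if} $C$ knew the decomposition of $AB$ into $A$ and $B$, since the QRF choice of a separable system is free regardless; your argument, resting on the inaccessibility of the decomposition, does not by itself cover that case, though your step 3 patches this by exhibiting explicit data-compatible alternatives. Conversely, your route is more self-contained within this paper (it avoids importing the free-choice principle from \cite{fgm:22b}) and makes the operational content explicit. Two small cautions: Theorem \ref{main-th} and Corollary \ref{er-epr-cor} are stated as indistinguishability claims for $A$ and $B$, not for $C$, so your transfer of the monogamous-entanglement limit to $C$'s perspective deserves a sentence of justification; and the claim that both extremes yield the same statistics is only implicit, not ``established,'' in the proof of Theorem \ref{channel-th}.
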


\begin{proof}
The systems $E$ and $AB$ are mutually separable in Diagram \eqref{locc-C-diag} by construction, so the result follows from the requirement that mutually separable systems have independent, free choice of QRFs, including space and time QRFs \cite{fgm:22b}.
\end{proof}

From Lemma \ref{no-metric}, we immediately have:

\begin{theorem} \label{ctc}
In any physical setting described by Diagram \eqref{locc-C-diag}, $C$ cannot determine whether $AB$ employs CTCs as computational resources.
\end{theorem}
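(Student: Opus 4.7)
The plan is to exploit Lemma \ref{no-metric} to argue that the classicality of the $A$--$B$ channel is a statement about $C$'s spacetime coordinates alone, and therefore says nothing about whether the same physical process, viewed in $AB$'s coordinates, is a closed timelike loop. First I would recall the Deutsch construction of Diagram \eqref{ctc-diag}: a computation employs CTCs precisely when some of its qubits, living in a factor $\mathcal{H}_{\rm tv}$, traverse a data path that closes on itself in the computing system's own spacetime and satisfy the self-consistency condition $\rho = {\rm Tr}_{\rm ch}[\mathbf{U}(\rho_{\rm in} \otimes \rho)\mathbf{U}^\dagger]$ of Eq.~\eqref{density-matrix-1}. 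Hence the property ``$AB$ uses CTCs as a computational resource'' is intrinsic to the geometry accessed by $AB$ itself, not to any external frame.

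Next, I would observe that the classicality assumption on the $A$--$B$ channel in Diagram \eqref{locc-C-diag} says exactly that this channel is timelike and of positive length \emph{in $C$'s metric}, i.e.\ ${\rm d}_C(A_c, B_c) > 0$ with $A_c, B_c$ connected by a timelike curve of $\mathrm{d}_C$. By Lemma \ref{no-metric}, $C$ has no access to $\mathrm{d}_{AB}$, so $C$ can constrain neither the separation nor the causal character of the curve connecting $A_c$ and $B_c$ in $AB$'s frame. In particular, the case $\mathrm{d}_{AB}(A_c, B_c) = 0$ — a topological identification of the two endpoints in $AB$'s frame, which by Corollary \ref{er-epr-cor} is operationally indistinguishable from an ordinary EPR pair shared across $\mathscr{B}$ — is consistent with everything $C$ observes, and is precisely what turns what $C$ sees as a timelike channel into a closed timelike loop on the $AB$ side, whose carriers can populate the $\mathcal{H}_{\rm tv}$ factor of a Deutsch-style circuit.

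The main obstacle, I expect, is not Lemma \ref{no-metric} itself but ruling out that the CTC self-consistency fixed-point condition leaves some residual operational signature visible to $C$. I would dispense with this using Lemma \ref{ambiguity}: the only data available to $C$ are the classical outcomes $A_i, B_i$ on $c_A, c_B$ together with the setting of $H_{\rm dec}$, and by that lemma these are insensitive to the internal Hamiltonian of $E_C$, to the evolution operator $\mathbf{U}$ of $AB$, and to the entanglement entropy across any internal decomposition of $AB$. Therefore $C$ cannot tell whether the observed statistics arise from a purely chronologically respecting evolution of $AB$ or from an evolution that solves $\Phi(\rho) = \rho$ around a CTC implemented by the identified endpoints. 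The theorem then follows as the one-line consequence of Lemma \ref{no-metric} that the passage anticipates.
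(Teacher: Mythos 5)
Your proposal is correct and takes essentially the same route as the paper's own proof: the key step in both is that Lemma \ref{no-metric} prevents $C$ from showing ${\rm d}_{AB}(A_c, B_c) \neq 0$, and if ${\rm d}_{AB}(A_c, B_c) = 0$ then the channel that is classical and timelike for $C$ is a CTC for $AB$, hence available as a computational resource. Your extra step invoking Lemma \ref{ambiguity} to rule out any residual operational signature of the Deutsch self-consistency condition is a sensible elaboration of a point the paper's two-line proof leaves implicit, not a different argument.
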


\begin{proof}
From Lemma \ref{ctc}, $S$ cannot show that ${\rm d}_{AB} (A_c, B_c) \neq 0$.  If ${\rm d}_{AB} (A_c, B_c) = 0$, however, the classical channel from $A_c$ to $B_c$ in $E$ is a CTC for $AB$, and hence is available to $AB$ as a computational resource.
\end{proof}

Aaronson and Watrous \cite{aaronson:09} have shown that both classical and quantum computers can employ CTCs to solve any problems in the complexity class PSPACE --- this consists of all problems solvable on a classical TM with a polynomial amount of memory. Theorem \ref{ctc} shows that the problem of deciding whether a physical system is a computer that can employ CTCs as a resource is TM-undecidable. Thus 
a proffered solution to a PSPACE problem, for which independent means of verification are unavailable, is TM-undecidable.

\section{Discussion} \label{concl}

We have shown here that whether quantum, or in the case of CTCs even classical, computers employ nonlocal resources when performing computations is generically undecidable in operational settings.  All interactions with physically-implemented computers are operational; hence our results apply to all such interactions.  They show that the space and time complexity of physically-implemented computational processes cannot be determined unambiguously, and place principled limits on the extent to which formal descriptions of computational processes, e.g. formal descriptions of MIP* or CTC-using machines, can be demonstrably realized in practice.  They also limit our ability to infer from observations and experiments the computational architectures of computers found ``in the wild'', including living organisms.


As shown in \cite{culf:24,mastel:24}, constraint systems (CS) and CSPs can be formulated in the language of MIP and MIP* architectures, with the verifier $C$ implementing the satisfaction condition. Specifically, \cite[\S 4]{culf:24} and \cite[Th. 1.1]{mastel:24} demonstrate relations between CSPs, languages in MIP* (and hence in RE), and protocols for the Halting problem of the form CS-MIP*(2,1,$c$,$s$), with $c$ and $s$ being the completeness and soundness probabilities, respectively; see \cite[Cor. 4]{culf:24} for the special case where $c=1$.  The results of \S \ref{main-result} show that $C$ cannot operationally demonstrate independence between constraints and identified partial solutions; this applies to protocols of the form  CS-MIP*(2,1,$c$, $s$) as special cases. In fact, the Halting problem has been shown \cite{df:20} to be equivalent to the Frame problem \cite{mccarthy:69}: broadly speaking, the problem of circumscribing whatever is relevant in a given physical situation. What we have shown here is, in essence, that empirically circumscribing resource availability and usage requires solving the Frame problem.

These results can be given a straightforward interpretation: finite interactions with an unknown quantum system can place a lower limit, but not an upper limit, on the Hilbert-space dimension of that system.  This extends to quantum systems the limitations on inferences from finite observations proved for classical systems in 1956 \cite{moore:56}.  The existence of such limits illustrates the profound distinction between behaviors that can be shown theoretically to be logically possible and behaviors that can be unambiguously observed by finite agents such as ourselves.


\end{document}